\begin{document}

\title{Modelling the neutrino in terms of Cosserat elasticity }

\author{O. CHERVOVA$^*$ and D. VASSILIEV$^{**}$}

\address{Department of Mathematics and Institute of Origins,\\
University College London,\\
Gower Street, London WC1E~6BT, UK\\
$^*$E-mail: olgac@math.ucl.ac.uk\\
$^{**}$E-mail: D.Vassiliev@ucl.ac.uk}

\begin{abstract}
The paper deals with the Weyl equation which is the massless Dirac
equation. We study the Weyl equation in the stationary setting, i.e.
when the the spinor field oscillates harmonically in time. We
suggest a new geometric interpretation of the stationary Weyl
equation, one which does not require the use of spinors, Pauli
matrices or covariant differentiation. We think of our 3-dimensional
space as an elastic continuum and assume that material points of
this continuum can experience no displacements, only rotations. This
framework is a special case of the Cosserat theory of elasticity.
Rotations of material points of the space continuum are described
mathematically by attaching to each geometric point an orthonormal
basis which gives a field of orthonormal bases called the coframe.
As the dynamical variables (unknowns) of our theory we choose the
coframe and a density. We choose a particular potential energy which
is conformally invariant and then incorporate time into our
action in the standard Newtonian way, by subtracting kinetic energy.
The main result of our paper is the theorem stating that in the
stationary setting our model is equivalent to a pair of Weyl
equations. The crucial element of the proof is the observation that
our Lagrangian admits a factorisation.
\end{abstract}

\keywords{neutrino; spin; torsion; Cosserat elasticity.}

\bodymatter

\section{Our model}
\label{Our model}

In this paper we work on a 3-manifold $M$ equipped with
local coordinates $x^\alpha$, $\alpha=1,2,3$, and prescribed
positive metric $g_{\alpha\beta}$ which does not depend on time $x^0$. We
view our 3-manifold $M$ as an elastic continuum whose material
points can experience no displacements, only rotations, with
rotations of different material points being independent.
Rotations of material points of the elastic continuum
are described mathematically by attaching to each geometric point of
the manifold $M$ an orthonormal basis, which gives a field of
orthonormal bases called the \emph{coframe}.

The coframe $\vartheta$ is a triple of orthonormal covector fields
$\vartheta^j$, $j=1,2,3$, on the 3-manifold $M$. Each covector field
$\vartheta^j$ can be written more explicitly as
$\vartheta^j{}_\alpha$ where the tensor index $\alpha=1,2,3$
enumerates the components. The orthonormality condition for the
coframe can be represented as a single tensor identity
\begin{equation}
\label{constraint for coframe}
g=\delta_{jk}\vartheta^j\otimes\vartheta^k.
\end{equation}
We view the identity (\ref{constraint for coframe}) as a kinematic
constraint: the metric $g$ is given (prescribed) and the coframe
elements $\vartheta^j$ are chosen so that they satisfy
(\ref{constraint for coframe}), which leaves us with three real
degrees of freedom at every point of $M$.

As dynamical variables in our model we choose the coframe
$\vartheta$ and a positive density $\rho$. These
are functions of local coordinates $x^\alpha$ on $M$ as well as of
time $x^0$.

At a physical level, making the density $\rho$ a dynamical
variable means that we view our continuum more like a
fluid rather than a solid. In other words, we allow the
material to redistribute itself so that it finds its equilibrium
distribution.

The crucial element in our construction is the choice of potential
energy. As a measure of deformations caused by rotations of material
points we choose \emph{axial torsion}, which is the 3-form
given by the explicit formula
$T^\mathrm{ax}:=
\frac13\delta_{jk}\vartheta^j\wedge d\vartheta^k$
where $\,d\,$ denotes the exterior derivative.
We take the potential energy of our continuum to be
$P(x^0):=\int_M\|T^\mathrm{ax}\|^2\rho\,dx^1dx^2dx^3$.
It is easy to see that this potential energy
is conformally invariant: it does not change if we rescale our coframe as
$\vartheta^j\mapsto e^h\vartheta^j$
and our density as
$\rho\mapsto e^{2h}\rho$
where $h:M\to\mathbb{R}$ is an arbitrary scalar function.

We take the kinetic energy of our continuum to be
$K(x^0):=\int_M\|\dot\vartheta\|^2\rho\,dx^1dx^2dx^3$
where $\dot\vartheta$ is the 2-form
$\dot\vartheta:=
\frac13\delta_{jk}\vartheta^j\wedge\partial_0\vartheta^k$
with $\partial_0$ denoting the time derivative.
The 2-form $\dot\vartheta$ is, up to a constant
factor, the Hodge dual of the vector of angular velocity.

We now combine the potential energy and
kinetic energy to form the action
(variational functional) of our dynamic problem:
\begin{equation}
\label{our action}
S(\vartheta,\rho):=\int_{\mathbb{R}}(P(x^0)-K(x^0))\,dx^0
=\int_{{\mathbb{R}}\times M}L(\vartheta,\rho)\,dx^0dx^1dx^2dx^3
\end{equation}
where
\begin{equation}
\label{our Lagrangian density}
L(\vartheta,\rho):=(\|T^\mathrm{ax}\|^2-\|\dot\vartheta\|^2)\rho
\end{equation}
is our dynamic (time-dependent) Lagrangian density.

Our field equations (Euler--Lagrange equations) are obtained by
varying the action (\ref{our action}) with respect to the
coframe $\vartheta$ and density $\rho$. Varying with respect to the
density $\rho$ is easy: this gives the field equation
$\|T^\mathrm{ax}\|^2=\|\dot\vartheta\|^2$ which is equivalent to
$L(\vartheta,\rho)=0$. Varying with respect to the coframe
$\vartheta$ is more difficult because we have to maintain the kinematic
constraint (\ref{constraint for coframe}).

\section{Switching to the language of spinors}
\label{Switching to the language of spinors}

The technical difficulty mentioned above can be overcome by switching
to a different dynamical variable. It is known that
in dimension~$3$
a coframe $\vartheta$ and a (positive) density $\rho$ are equivalent
to a nonvanishing spinor field $\xi$ modulo the sign of $\xi$.
The explicit formulae relating coframes and spinors are given
in Appendix C of Ref.~\citen{prd2010}.
The advantage of switching to a spinor field $\xi$ is that there are no
kinematic constraints on its components, so the derivation
of field equations becomes straightforward.

Switching to spinors in formula
(\ref{our Lagrangian density})
we arrive at the following self-contained explicit spinor representation
of our dynamic Lagrangian density
\begin{multline}
\label{our Lagrangian density in terms of spinor}
L(\xi)=
\frac4{9\bar\xi^{\dot c}\sigma_{0\dot cd}\xi^d}
\Bigl(
[i(
\bar\xi^{\dot a}\sigma^\alpha{}_{\dot ab}\nabla_\alpha\xi^b
-
\xi^b\sigma^\alpha{}_{\dot ab}\nabla_\alpha\bar\xi^{\dot a}
)]^2
\\
-
\|i(
\bar\xi^{\dot a}\sigma_{\alpha\dot ab}\partial_0\xi^b
-
\xi^b\sigma_{\alpha\dot ab}\partial_0\bar\xi^{\dot a}
)\|^2
\Bigr)\sqrt{\operatorname{det}g}
\end{multline}
where the $\sigma$ are Pauli matrices and the $\nabla$ are covariant derivatives.

\section{Separating out time}
\label{Separating out time}

We write down the dynamic
(containing time derivatives)
field equation for the Lagrangian density
(\ref{our Lagrangian density in terms of spinor}) and seek
solutions of the form
\begin{equation}
\label{stationary spinor field}
\xi(x^0,x^1,x^2,x^3)=
e^{-ip_0x^0}\eta(x^1,x^2,x^3)
\end{equation}
where $p_0\ne0$
is a real number. We call solutions of the form
(\ref{stationary spinor field}) \emph{stationary}.

It turns out that despite the fact that our dynamic field equation
is nonlinear, time $x^0$ can be separated out and
stationary solutions do indeed make sense.
The underlying group-theoretic reason
for our nonlinear dynamic field equation
admitting separation
of variables is the fact that our model is $\mathrm{U}(1)$-invariant,
i.e.~it is invariant under the multiplication of a spinor field
by a complex constant of modulus 1.

Our problem has been reduced to the study of
the stationary (time-independent) Lagrangian density
\begin{equation}
\label{our Lagrangian density in terms of spinor stationary}
L(\eta)=
\frac{16}{9\bar\eta^{\dot c}\sigma_{0\dot cd}\eta^d}
\Bigl(
\Bigl[\frac i2(
\bar\eta^{\dot a}\sigma^\alpha{}_{\dot ab}\nabla_\alpha\eta^b
-
\eta^b\sigma^\alpha{}_{\dot ab}\nabla_\alpha\bar\eta^{\dot a}
)\Bigr]^2
-
(p_0\bar\eta^{\dot a}\sigma_{0\dot ab}\eta^b)^2
\Bigr)\sqrt{\operatorname{det}g}
\end{equation}
which is our dynamic Lagrangian
density (\ref{our Lagrangian density in terms of spinor})
with time $x^0$ separated out.

\section{Main result}
\label{Main result}

Our main result is the following

\begin{theorem}
\label{main theorem}
A nonvanishing time-independent spinor field $\eta$
is a solution of the field equation
for our stationary Lagrangian
density~(\ref{our Lagrangian density in terms of spinor stationary})
if and only if it is a solution of one of
the two stationary Weyl equations
\begin{equation}
\label{stationary Weyl equation}
\pm p_0\sigma^0{}_{\dot ab}\eta^b
+i\sigma^\alpha{}_{\dot ab}\nabla_\alpha\eta^b=0.
\end{equation}
\end{theorem}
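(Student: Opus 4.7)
Introduce the real scalar
\[
A:=\frac{i}{2}\bigl(\bar\eta^{\dot a}\sigma^\alpha{}_{\dot ab}\nabla_\alpha\eta^b-\eta^b\sigma^\alpha{}_{\dot ab}\nabla_\alpha\bar\eta^{\dot a}\bigr)
\]
and the positive scalar $f:=\bar\eta^{\dot a}\sigma^0{}_{\dot ab}\eta^b$, and set $L_\pm(\eta):=A\pm p_0 f$. The stationary Lagrangian density~(\ref{our Lagrangian density in terms of spinor stationary}) then factorises as a difference of squares,
\[
L(\eta)=\frac{16\sqrt{\det g}}{9f}\bigl(A^2-p_0^2 f^2\bigr)=\frac{16\sqrt{\det g}}{9f}\,L_+(\eta)\,L_-(\eta).
\]
Each $L_\pm$ is, up to the volume element, the standard Hermitian Weyl Lagrangian density for one of the two equations in~(\ref{stationary Weyl equation}): a short computation gives its Euler--Lagrange derivative with respect to $\bar\eta$ as $\pm p_0\sigma^0\eta+i\sigma^\alpha\nabla_\alpha\eta$, and it vanishes on solutions of the corresponding Weyl equation (contracting that equation with $\bar\eta^{\dot a}$ yields $A=\mp p_0 f$).

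Varying $L$ directly with respect to $\bar\eta$, using $\mathrm{EL}_{\bar\eta}(A)=i\sigma^\alpha\nabla_\alpha\eta$ and $\partial A/\partial(\nabla_\alpha\bar\eta^{\dot a})=-\tfrac{i}{2}\sigma^\alpha{}_{\dot ab}\eta^b$, the Leibniz rule applied to $A^2/f-p_0^2 f$ yields --- after multiplication by $f$ and division by a positive overall prefactor --- the field equation
\begin{equation}
\label{fieldeq}
2iA\,\sigma^\alpha\nabla_\alpha\eta-\Bigl(\frac{A^2}{f}+p_0^2 f\Bigr)\sigma^0\eta+if\,\nabla_\alpha(A/f)\,\sigma^\alpha\eta=0.
\end{equation}
The crucial structural feature is that the last, ``extra'' term is proportional to $\nabla_\alpha(A/f)$ and therefore drops out as soon as $A/f$ is locally constant.

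For the implication~($\Leftarrow$): if $\eta$ solves one of the Weyl equations~(\ref{stationary Weyl equation}) then $A=\mp p_0 f$, so $A/f$ is constant, the extra term in~(\ref{fieldeq}) vanishes, and the first two terms cancel on substituting $i\sigma^\alpha\nabla_\alpha\eta=\mp p_0\sigma^0\eta$. For the implication~($\Rightarrow$): contract~(\ref{fieldeq}) with $\bar\eta^{\dot a}$ and take the real part. Since $\bar\eta\sigma^\alpha\eta$ is real by Hermiticity of the Pauli matrices, and since $A=-\mathrm{Im}(\bar\eta\sigma^\alpha\nabla_\alpha\eta)$, the real part collapses to the pointwise algebraic identity $A^2-p_0^2 f^2=L_+L_-\equiv 0$ on $M$. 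Hence $A/(p_0 f)$ is a continuous $\{\pm1\}$-valued function on $M$, so locally constant; on each connected component $A=\epsilon\, p_0 f$ for some fixed $\epsilon\in\{\pm1\}$. Reinserting this in~(\ref{fieldeq}) kills the extra term, and what remains reads $2i\epsilon p_0 f\,\sigma^\alpha\nabla_\alpha\eta-2p_0^2 f\,\sigma^0\eta=0$, i.e.\ $i\sigma^\alpha\nabla_\alpha\eta=\epsilon p_0\sigma^0\eta$, which is~(\ref{stationary Weyl equation}) with sign~$-\epsilon$. The main technical hurdle is obtaining~(\ref{fieldeq}) correctly: the $1/f$ prefactor couples non-trivially to the derivative structure of $A^2$, and identifying the resulting $\nabla_\alpha(A/f)$ piece --- together with the observation that it vanishes precisely where $A/f$ is locally constant --- is what lets the Lagrangian-level factorisation descend to the field-equation level.
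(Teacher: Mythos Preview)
Your argument is correct and self-contained, whereas the paper's proof is more abstract and partly outsourced. The paper writes the same factorisation (in the equivalent form $L=-\tfrac{32p_0}{9}\,L_+L_-/(L_+-L_-)$, with $L_+-L_-=2p_0 f\sqrt{\det g}$), then records the \emph{scaling covariance} $L_\pm(e^h\eta)=e^{2h}L_\pm(\eta)$ and appeals to a general lemma from the companion preprint to conclude. Your route instead computes the Euler--Lagrange equation explicitly as your~(\ref{fieldeq}) and extracts the pointwise constraint $A^2=p_0^2f^2$ by contracting with $\bar\eta$ and taking the real part. These two devices are in fact the same idea in different clothing: the real part of the $\bar\eta$-contraction of the field equation is precisely the Noether identity associated with the real scaling $\eta\mapsto e^h\eta$, so your computation is an explicit unpacking of the paper's scaling-covariance step. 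What your approach buys is transparency --- one sees exactly where the ``extra'' $\nabla_\alpha(A/f)$ term comes from and why it vanishes once $L_+L_-=0$ --- at the cost of a direct EL calculation; the paper's approach buys generality, since the abstract lemma would apply to any Lagrangian assembled from scaling-covariant pieces in the same way. One small caveat: your conclusion that $A/(p_0 f)$ is locally constant yields one Weyl equation \emph{per connected component}; the paper's statement implicitly takes $M$ connected.
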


\begin{proof}
Observe that our stationary Lagrangian
density~(\ref{our Lagrangian density in terms of spinor stationary})
factorises as
\begin{equation}
\label{factorization formula}
L(\eta)=-\frac{32p_0}9
\frac{L_+(\eta)\,L_-(\eta)}{L_+(\eta)-L_-(\eta)}
\end{equation}
where
$
L_{\pm}(\eta):=
\Bigl[
\frac i2(
\bar\eta^{\dot a}\sigma^\alpha{}_{\dot ab}\nabla_\alpha\eta^b
-
\eta^b\sigma^\alpha{}_{\dot ab}\nabla_\alpha\bar\eta^{\dot a}
)
\pm p_0\bar\eta^{\dot a}\sigma^0{}_{\dot ab}\eta^b
\Bigr]\sqrt{\operatorname{det}g}
$
are the Lagrangian densities for the stationary Weyl equations
(\ref{stationary Weyl equation}).
It is easy to see that the latter posses the property of scaling covariance:
\begin{equation}
\label{proof of theorem equation 1}
L_\pm(e^h\eta)=e^{2h}L_\pm(\eta)
\end{equation}
where $h:M\to\mathbb{R}$ is an arbitrary scalar function.
In
fact, the Lagrangian density of any formally selfadjoint
(symmetric) linear first order partial differential operator
has the scaling covariance property (\ref{proof of theorem equation 1}).

The abstract argument presented in Section 6 of Ref.~\citen{prd2010}
shows that properties
(\ref{factorization formula}) and (\ref{proof of theorem equation 1})
imply the statement of Theorem \ref{main theorem}.
\end{proof}

\end{document}